\documentclass[article,aps,prl,reprint,twocolumn,groupedaddress,10pt]{revtex4-2}
\usepackage{amsmath}%,preprint
\usepackage{amsthm}
\usepackage{mathrsfs}
\usepackage{amsfonts}
\usepackage{amssymb}
\usepackage{graphicx}
\usepackage[colorlinks,linkcolor=blue,anchorcolor=blue,urlcolor=blue,citecolor=blue]{hyperref}
\usepackage{eufrak}
\usepackage{multirow}
\usepackage{textcomp}
\usepackage{epstopdf}
\usepackage{float}
\usepackage{epstopdf}

\begin{document}

\title{Experimentally demonstrating indefinite causal order algorithms to solve the generalized Deutsch's problem}

\author{Wen-Qiang Liu}\thanks{These two authors contributed equally to this work.}
\affiliation{Center for Quantum Technology Research and Key Laboratory of Advanced Optoelectronic Quantum Architecture and Measurements (MOE), School of Physics, Beijing Institute of Technology, Beijing 100081, China}	
\author{Zhe Meng}\thanks{These two authors contributed equally to this work.}
\affiliation{Center for Quantum Technology Research and Key Laboratory of Advanced Optoelectronic Quantum Architecture and Measurements (MOE), School of Physics, Beijing Institute of Technology, Beijing 100081, China}	
\author{Bo-Wen Song}
\affiliation{Center for Quantum Technology Research and Key Laboratory of Advanced Optoelectronic Quantum Architecture and Measurements (MOE), School of Physics, Beijing Institute of Technology, Beijing 100081, China}	
\author{Jian Li}
\affiliation{Center for Quantum Technology Research and Key Laboratory of Advanced Optoelectronic Quantum Architecture and Measurements (MOE), School of Physics, Beijing Institute of Technology, Beijing 100081, China}	
\author{Qing-Yuan Wu}
\affiliation{Center for Quantum Technology Research and Key Laboratory of Advanced Optoelectronic Quantum Architecture and Measurements (MOE), School of Physics, Beijing Institute of Technology, Beijing 100081, China}	
\author{Xiao-Xiao Chen}
\affiliation{Center for Quantum Technology Research and Key Laboratory of Advanced Optoelectronic Quantum Architecture and Measurements (MOE), School of Physics, Beijing Institute of Technology, Beijing 100081, China}	
\author{Jin-Yang Hong}
\affiliation{Center for Quantum Technology Research and Key Laboratory of Advanced Optoelectronic Quantum Architecture and Measurements (MOE), School of Physics, Beijing Institute of Technology, Beijing 100081, China}	
\author{An-Ning Zhang}
\email{Corresponding author. Anningzhang@bit.edu.cn}
\affiliation{Center for Quantum Technology Research and Key Laboratory of Advanced Optoelectronic Quantum Architecture and Measurements (MOE), School of Physics, Beijing Institute of Technology, Beijing 100081, China}	
\author{Zhang-qi Yin}
\email{Corresponding author. zqyin@bit.edu.cn}
\affiliation{Center for Quantum Technology Research and Key Laboratory of Advanced Optoelectronic Quantum Architecture and Measurements (MOE), School of Physics, Beijing Institute of Technology, Beijing 100081, China}

\date{\today }

\begin{abstract}
Deutsch's algorithm is the first quantum  algorithm to show the advantage over the classical algorithm. Here we generalize Deutsch's problem to $n$ functions and propose a new quantum algorithm with indefinite causal order to solve this problem.  The new algorithm not only reduces the number of queries to the black-box by half over the classical algorithm, but also significantly reduces the number of required quantum gates over the Deutsch's algorithm. We experimentally demonstrate the algorithm in a stable Sagnac loop interferometer with common path, which overcomes the obstacles of both phase instability and low fidelity of Mach-Zehnder interferometer. The experimental results have shown both an ultra-high and robust success probability  $\sim 99.7\%$.  Our work opens up a new path towards solving the practical problems with  indefinite casual order quantum circuits. 
\end{abstract}

%\pacs{03.67.Lx, 42.50.Ex, 05.70.Ln}

\maketitle

%\section{Introduction}

\emph{Introduction.}---Deutsch's algorithm \cite{deutsch1985quantum} is the first quantum algorithm to show the quantum advantage. Later, some  well-known quantum algorithms, such as Shor's factorization algorithm \cite{Shor} and Grover's  search algorithm \cite{grover1997quantum}, also outperform  their classical peers and show the advantage of quantum  algorithms.  These algorithms can be implemented in quantum circuits with the fixed-gate order and exhibit well-defined causality in the geometry of spacetime.  However, the quantum circuit model is not a general framework to describe the quantum processes, because quantum mechanics allows the superposition of two or even multiple physical events in different temporal orders  \cite{oreshkov2012quantum,chiribella2013quantum}. One interesting problem is whether this general framework can be used to improve the quantum algorithms.

In the study of quantum gravity, researchers have discovered an exotic phenomenon in temporal orders called indefinite causal structure \cite{gambini2004relational,hardy2007towards,hardy2009quantum,christodoulou2019possibility}. Just like the superposition principle of quantum states, the superposition of  causal order leads to causal non-separability or quantum entanglement in time domain \cite{oreshkov2012quantum,araujo2015witnessing,oreshkov2016causal,rubino2017experimental,oreshkov2019time,cotler2017experimental,DONG20171235}, which is incompatible with operations that have the fixed order.  The use of indefinite causal structure to solve the task of quantum information process is a topic of great interest. As a new operational resource, the indefinite causal structures have shown potential advantages ranging from quantum computing \cite{araujo2014computational,procopio2015experimental,rambo2016functional,araujo2017quantum,renner2021reassessing,taddei2021computational,renner2022computational,escandon2023practical}, quantum communication \cite{chiribella2012perfect,guerin2016exponential,ebler2018enhanced,wei2019experimental,3-swithprocopio2019communication,guo2020experimental,procopio2020sending,goswami2020increasing,chiribella2021indefinite,rubino2021experimental,chiribella2021quantum}, quantum metrology \cite{zhao2020quantum,chapeau2021noisy,yin2023experimental}, to quantum thermodynamics \cite{guha2020thermodynamic,goldberg2021breaking,cao2022quantum,nie2022experimental,liu2022thermodynamics,simonov2022work,dieguez2023thermal}. For instance, it has been demonstrated that this new resource can remarkably reduce the query complexity in  quantum computing \cite{araujo2014computational,procopio2015experimental,renner2021reassessing,taddei2021computational,renner2022computational,escandon2023practical}, reduce the communication complexity \cite{guerin2016exponential,wei2019experimental} and improve the communication capacity \cite{ebler2018enhanced,3-swithprocopio2019communication,guo2020experimental,procopio2020sending,goswami2020increasing,chiribella2021indefinite,rubino2021experimental,chiribella2021quantum}. In addition, the indefinite causal order can also be used in quantum thermodynamics to improve the optimal heat-bath algorithmic cooling \cite{goldberg2021breaking} and enhance the thermodynamic efficiency \cite{cao2022quantum,nie2022experimental,liu2022thermodynamics}.

Quantum SWITCH \cite{chiribella2013quantum} is an efficient method for implementing the indefinite causal structure, where the operation order of two or more quantum gates is determined by control qubits. The simplest causal order superposition can be realized with a 2-SWITCH, which has been experimentally demonstrated to determine the commutation or anti-commutation of two unknown operations by querying each of them only once  \cite{procopio2015experimental}. Recently, both an $N$-SWITCH ($N>2$) \cite{3-swithprocopio2019communication,procopio2020sending} and a high-order quantum switch \cite{das2022quantum} have been proposed to  improve the efficiency of quantum information transmission. In addition, the $N$-SWITCH can  also be used to address the phase-estimation problem and the generalized Hadamard promise problem, and showed the computational advantage from indefinite causal structure over the fixed one \cite{taddei2021computational,renner2022computational,escandon2023practical}. Nevertheless, it still lacks the improvement of those well-known quantum algorithms with the indefinite causal order.

Up to now, the experimental realization of quantum SWITCH has mostly relied on the folded Mach-Zehnder interferometers (MZIs), where the path degree of freedom (DOF) of a single photon coherently  manipulates the order of two polarization operations. Unfortunately, the MZI based quantum SWITCH is faced with the challenges of both phase instability and low success possibility  \cite{procopio2015experimental,guo2020experimental,cao2022quantum,goswami2020experiments}. On one hand, it has to periodically adjust the phase to maintain the stability of the phase, which creates a lot of redundant work as the number of experiments increases. On the other hand, the geometric configuration of the folded MZI indicates that the photons on different arms undergo different polarization optical operations due to the non-common path modes of photons passing through the same device.  This would lead to additional errors and seriously degrade the fidelity of the experimental results. The MZI based quantum SWITCH so far can only be achieved with a success possibility around $97\%$ \cite{procopio2015experimental}.

In this Letter, we generalize the Deutsch's problem to the situation with $n$ Boolean functions, and propose a new algorithm with indefinite causal structure to solve the generalized  Deutsch's problem.  
Compared with the Deutsch's algorithm, the new algorithm reduces the gate number and has low circuit depth, therefore reveals the advantage of the indefinite casual order. Then, we experimentally demonstrate the new algorithm in a stable Sagnac loop geometric interferometer, which overcomes the obstacles of phase instability and low fidelity due to the perfect overlap of path mode of a photon and the full reciprocity of the polarized operations.
Compared with the previous experiments of the Deutsch's algorithm \cite{tame2007experimental,zhang2010demonstration,zhang2012implementing}, our experiment is greatly simplified.  The experimental results show  a robust and the currently maximum  success probability around $99.7 \%$, which goes far beyond results of the previous experiments \cite{tame2007experimental,zhang2010demonstration,zhang2012implementing}.

%\section{Theory}

\emph{Theory.}---In 1985, Deutsch considered the following problem  whether a given Boolean function $f(x)$: \{0, 1\} $\mapsto$ \{0, 1\} is a balanced function or a constant function. Here, if the mapping result is $f(0)=f(1)=0$ or 1, one calls the function $f$ a constant function,  otherwise if $f(0)\neq f(1)$, one calls the function $f$ a balanced function. To solve the Deutsch's problem, it must query the function twice in a classical computer. Fortunately,
Deutsch found the so-called Deutsch's algorithm to solve the  problem only by querying the unknown function $f$ once in a quantum computer (see Fig. \ref{Deutsch}(a)). 

We here consider a generalized Deutsch's problem, which describes that there are $n$  Boolean functions $f_i(x)$ ($i$=1, 2, $\ldots, n$), each of them is either a balanced function or a constant function, and one wants to know whether there are an odd number of constant functions.  As shown in Fig. \ref{Deutsch}(b), we extend the Deutsch's algorithm to solve the generalized Deutsch's problem.  In this way, we require $n$ two-qubit black-box operations $U_{f_i}$ that embody in the unknown function $f_i(x)$, and  require to query each $f_i(x)$ one time (the details see Supplemental Material). In following text, we  will discuss how to solve the generalized Deutsch's problem by using indefinite causal order operations.

Figure \ref{Deutsch}(c) schematically illustrates a $2$-SWITCH with indefinite causal order, which coherently controls the operation order: gate $U_2$ is before $U_1$ if the controlled qubit is in the state $|0\rangle_c$, and gate $U_1$ is before $U_2$ if the controlled qubit is in the state $|1\rangle_c$. If the initial controlled qubit is prepared in the state $|+\rangle_c=\frac{1}{\sqrt{2}}(|0\rangle_c+|1\rangle_c)$ (see Fig. \ref{Deutsch}(d)), after a 2-SWITCH applies on $U_1$ and $U_2$, and then a Hardmard gate acts on the controlled qubit, the state of whole system becomes
\begin{eqnarray}              \label{eq3}
\begin{aligned}
\frac{1}{2}\big(|0\rangle_c\otimes \{U_1,  U_2\}|\psi\rangle_t+|1\rangle_c\otimes [U_1, U_2]|\psi\rangle_t\big).
\end{aligned}
\end{eqnarray}
Here $|\psi\rangle_t$ is the arbitrary target state operated by the gates $U_1$ and $U_2$. [$U_1$, $U_2$]=$U_1U_2$-$U_2U_1$ and \{$U_1$, $U_2$\}=$U_1U_2$+$U_2U_1$ denote the commutator and anti-commutator of operators $U_1$ and $U_2$, respectively. The commutation of the operators $U_1$ and $U_2$ can be determined from the measurement results of the controlled qubits in Eq. (\ref{eq3}). If the result is $0$ ($1$), then  
$U_1$ and $U_2$ are commutative (anti-commutative).

The 2-SWITCH can be used to solve the Deutsch's and the generalized Deutsch's problems. 
Here we define an operator $D(f_i)$=$\sum\limits_{x\in\{0, 1\}}(-1)^{f_i(x)}|x\rangle\langle x|$ ($i$=1, 2, $\ldots$, $n$), a gate $U_1=D(f_1)D(f_2)\ldots D(f_n)$ and a gate $U_2=X$  (a Pauli $X$ gate). For any input state $|y\rangle \in \big\{|0\rangle, |1\rangle\big\}$, (i) if $n$ is odd and $[U_1, U_2]|y\rangle$=0, then there are an odd number of constant functions; otherwise if $\{U_1, U_2\}|y\rangle$=0, then the number of constant functions is not odd. (ii) if $n$ is even and $[U_1, U_2]|y\rangle$=0, then the number of constant functions is not odd; otherwise if $\{U_1, U_2\}|y\rangle$=0, then there are an odd number of constant functions  (the proof see Supplemental Material). 
%The algorithm allows us to solve the generalized Deutsch's problem by measuring the controlled qubits in Eq. (\ref{eq3}). 

\begin{figure}   
\begin{center}
\includegraphics[width=8.2 cm,angle=0]{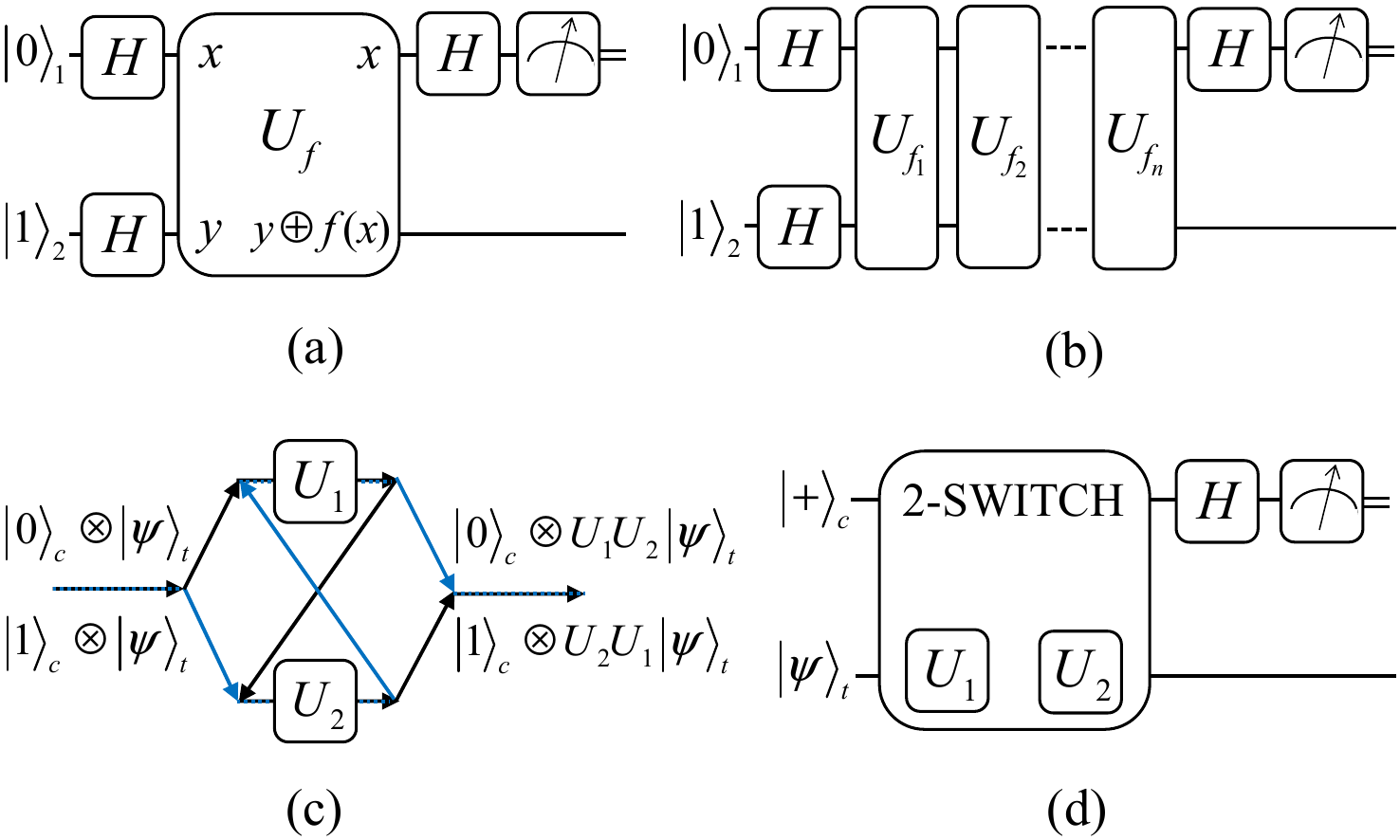}
\caption{(a) A well-known Deutsch's algorithm. $H$ is a Hardmard gate and a black-box operation $U_f$ realizes the transformation $U_{f}|x\rangle|y\rangle=|x\rangle|y\oplus f(x)\rangle$, where $\oplus$ denotes addition modulo 2. (b) A generalized Deutsch's algorithm we proposed. Each black-box operation $U_{f_i}$ realizes $U_{f_i}|x\rangle|y\rangle=|x\rangle|y\oplus f_i(x)\rangle$. (c) A quantum 2-SWITCH, which allows a coherent control of the operation order: when the controlled qubit is $|0\rangle_c$, the operation order is $U_2$ before $U_1$ (blue circuit), and when the controlled qubit is $|1\rangle_c$, the operation order is $U_1$ before $U_2$ (black circuit). (d) Our new algorithm to solve the generalized Deutsch's problem using the 2-SWITHCH. $|+\rangle_c=\frac{1}{\sqrt{2}}(|0\rangle_c+|1\rangle_c)$ and $|\psi\rangle_t$ is an arbitrary target state. Operation $U_1$ denotes the product of $n$ matrices that derive from $\pm I$ and $\pm Z$, which is determined by the value of each function $f_i(x)$. Operation $U_2$ denotes a Pauli $X$ gate. Here $I$ and $Z$ are a $2\times2$ identity matrix and a Pauli $Z$ gate, respectively.} \label{Deutsch}
\end{center}
\end{figure}

From the definition of $D(f_i)$, one can see that $D(f_i)$ is a diagonal matrix with entries $\pm1$ in the computational basis.  The operator $U_1$ contains the information of all functions $f_i(x)$. Specifically, $U_1$ is the product of $n$ matrices that derive from $\pm I$ and $\pm Z$, which is determined by the value of each function $f_i(x)$. Here $I$ and $Z$ are $2\times2$ identity matrix and Pauli $Z$ matrix, respectively. We compare our algorithm and the classical algorithm and the generalized Deutsch's algorithm to solve the generalized Deutsch's problem. For the classical algorithm, one has to query each $f_i(x)$ twice, while our algorithm and generalized Deutsch's algorithm only need to query each $f_i(x)$ once. In the generalized Deutsch's algorithm, the type of each function is determined by a two-qubit black-box gate $U_{f_i}$, which requires a series of the complex two-qubit controlled-NOT (CNOT) gates to realize. In contrast, in our algorithm, the type of each function is determined by a one-qubit black-box gate $D(f_i)$. A controlled qubit controls the overall  order of the black-box gates, which has much lower circuit depth than the generalized Deutsch's algorithm. As the complex CNOT gate is not required, our algorithm can be realized with very good scalability.

\begin{figure}   %[htpb] %[!h]
\begin{center}
\includegraphics[width=8.3 cm,angle=0]{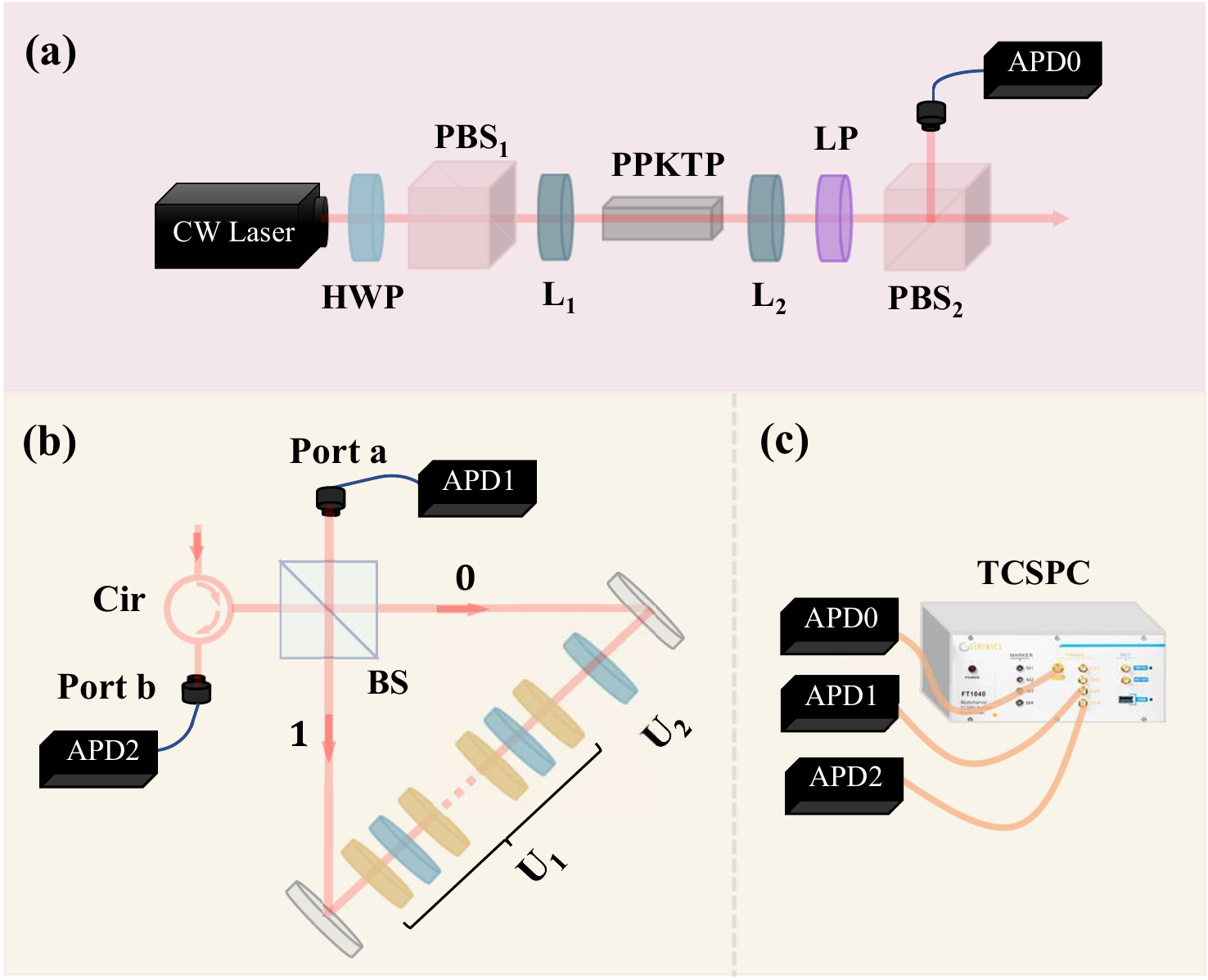}
    \caption{Schematic of the experimental setup to solve the generalized Deutsch's problem in a Sagnac geometric interferometer with indefinite causal order. (a) The preparation of a heralded single-photon source.  The single photons are created via a type-II spontaneous parametric down-conversion (SPDC) using a
poled potassium titanyl phosphate (PPKTP) crystal. (b) The implementation of a Sagnac loop interferometer with indefinite causal order. $U_1$ and $U_2$, two gate operations, which can be realized by half-wave plates  (HWPs) and quarter-wave plates (QWPs). (c) A counting module. PBS, polarizing beam splitter; L$_1$ and L$_2$, lenses; LP, long pass filter; Cir, circulator; BS, 50:50 beam splitter;  APD, avalanche photon-diode; TCSPC, time-correlated single-photon counting.} \label{setup}
\end{center}
\end{figure}

\begin{figure}   [htpb] %[!h]
\begin{center}
\includegraphics[width=8.4 cm,angle=0]{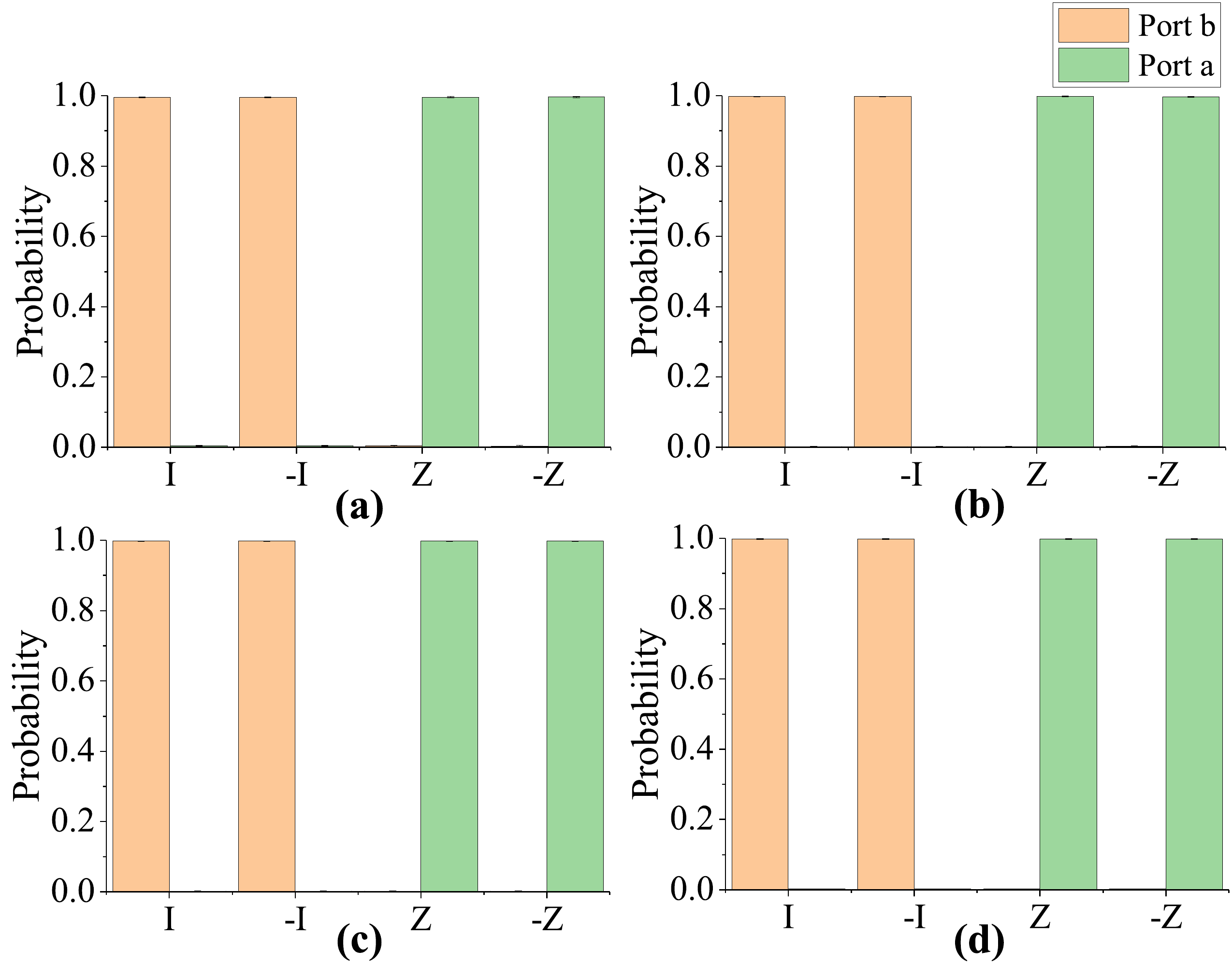}
    \caption{Experimental results to solve the Deutsch's problem. (a)-(d) show the experimental probabilities of the photons exiting from port $a$ or port $b$ in the polarization states $|H\rangle$, $|V\rangle$, $|D\rangle$, and $|A\rangle$, respectively, when determining whether $f$ is a balanced function or a constant function. The orange bar represents the observed probability of the photon exiting from port $b$, and the green bar represents the probability of the photon exiting from port $a$. If $f$ is a constant function, the photons exit from port $b$ ideally, while if $f$ is a balanced function, the photons exit from port $a$ ideally. The x-axis indicates the selection of the gate $U_1$, and $U_2$ is always Pauli $X$ gate. The average success probability of these data is 0.9972$\pm$0.0013.} \label{F1}
\end{center}
\end{figure}

\emph{Experiments.}---Based on the folded MZI, Procopio \emph{et al.} have experimentally demonstrated a 2-SWITCH to determine whether two gate operations are commutative or anti-commutative \cite{procopio2015experimental}. However, this configuration suffers from the phase instability. Recently, an optimized quantum SWITCH was realized by using Sagnac geometric interferometer \cite{stromberg2022demonstration}. Motivated by this, we experimentally realize our algorithm in the stable Sagnac configuration, where the path DOF of a single photon as the controlled qubit and the polarization DOF of the single photon as the target qubit. This implies that the path DOF is used for the coherent superposition of two different gate orders acting on the polarization DOF.

The experimental setup to demonstrate our algorithm for solving the generalized Deutsch's problem is illustrated in Fig. \ref{setup}, which incorporates a heralded single-photon source, a Sagnac loop interferometer, and a counting module. As shown in Fig. \ref{setup}(a), a continuous-wave diode laser emits a pump laser with a central wavelength of 405 nm and a power of 20 mW. The pump laser is used to generate photon pairs  with a wavelength of 810 nm via type-II spontaneous parametric down-conversion (SPDC) in a periodically poled potassium titanyl phosphate  (PPKTP) crystal.  We utilize a half-wave plate (HWP) and a polarization beam splitter (PBS$_1$) to regulate optical power, and use two lenses (L$_1$ and L$_2$) before and after the PPKTP crystal to focus and collimate beams. Subsequently, the photon pairs are filtered with a long pass filter (LP) to remove the pumped laser, and then are split on PBS$_2$ before the photons couple into a single-mode fiber. One photon of each photon pair is detected to herald the presence of the idler photon, and the other photon is used to inject into a Sagnac interferometer.

As shown in Fig. \ref{setup}(b), a signal horizontally polarized photon $|H\rangle$ of each photon pair is injected into the Sagnac interferometer from the upper entrance. The signal photon first is directed to a 50:50 beam splitter (BS) by a circulator (Cir). The BS acts on the path DOF of the photon, and realizes the transformation $|H\rangle \stackrel{\mathrm{BS}}{\longrightarrow} \frac{1}{\sqrt{2}}\left(|H\rangle_0+|H\rangle_1\right)$. Here subscripts 0 and 1 denote the paths of propagating photon. Subsequently, the photon in path 0 travels clockwise through the gate $U_2$ and then $U_1$, while the photon in path 1 travels counterclockwise through the gate $U_1$ and then $U_2$.  Finally, the photon in two paths  coherently recombine at BS and obtain an output state
%%
%\begin{eqnarray}              \label{eq6}
%\begin{split}
%|H\rangle&\stackrel{\mathrm{BS}}{\longrightarrow}\frac{1}{\sqrt{2}}\left(|H\rangle_0+|H\rangle_1\right)\\&
%\stackrel{U_1,U_2}{\longrightarrow}\frac{1}{\sqrt{2}}\left(U_1U_2|H\rangle_0+U_2U_1|H\rangle_1\right)\\&
%\stackrel{\mathrm{BS}}{\longrightarrow}\frac{1}{2}\{U_1, U_2\}|H\rangle_a+\frac{1}{2}[U_1, U_2]|H\rangle_b.
%\end{split}
%\end{eqnarray}
%%
%
\begin{eqnarray}              \label{eq6}
\begin{aligned}
\frac{1}{2}\{U_1, U_2\}|H\rangle_a+\frac{1}{2}[U_1, U_2]|H\rangle_b.
\end{aligned}
\end{eqnarray}
We can determine that whether there are an odd number of constant functions  by measuring the state of the path qubit (i.e., the photon exits from port $a$ or part $b$ of the BS). We note that arbitrary  polarized one-qubit gate can be realized by using two quarter-wave plates (QWPs) and one HWP \cite{simon1989universal,simon1990minimal,simon2012hamilton}. Here $D(f_i)$ and $U_2$ in our algorithm are some Pauli gates, whose realizations are presented in Tab. \ref{table3} of Supplemental Material. In experiments, the black-box operation $U_1$ is constructed by the setting of a series of $D(f_i)$.

To acquire the experimental data, we first initialize the $U_1$ gate as the identity matrix $I$ to adjust the phase of the interferometer to zero via  utilizing a liquid crystal (not depicted in Fig. \ref{setup}).  The all possibilities for each black-box function $f_i(x)$ embodied by gate $U_1$, which can be realized by manipulating the wave-plate angles. Note that all $U_1$ gates have good reciprocity in this Sagnac loop interferometer, that is, photons travelling both clockwise and counterclockwise through the wave plates can correctly realize $U_1$. While for the realization of the gate $U_2$, it  introduces a relative phase $\pi$ in this interferometer when photons pass through the wave plates clockwise and counterclockwise. This leads to a conclusion contrary to Eq. (\ref{eq6}):  the photons exiting from port $a$  mean that $U_1$ and $U_2$ are anti-commutative, while the photons exiting from port $b$ mean that $U_1$ and $U_2$ are commutative.

\begin{figure}    [htbp]
\centering
\includegraphics[width=8.4 cm,angle=0]{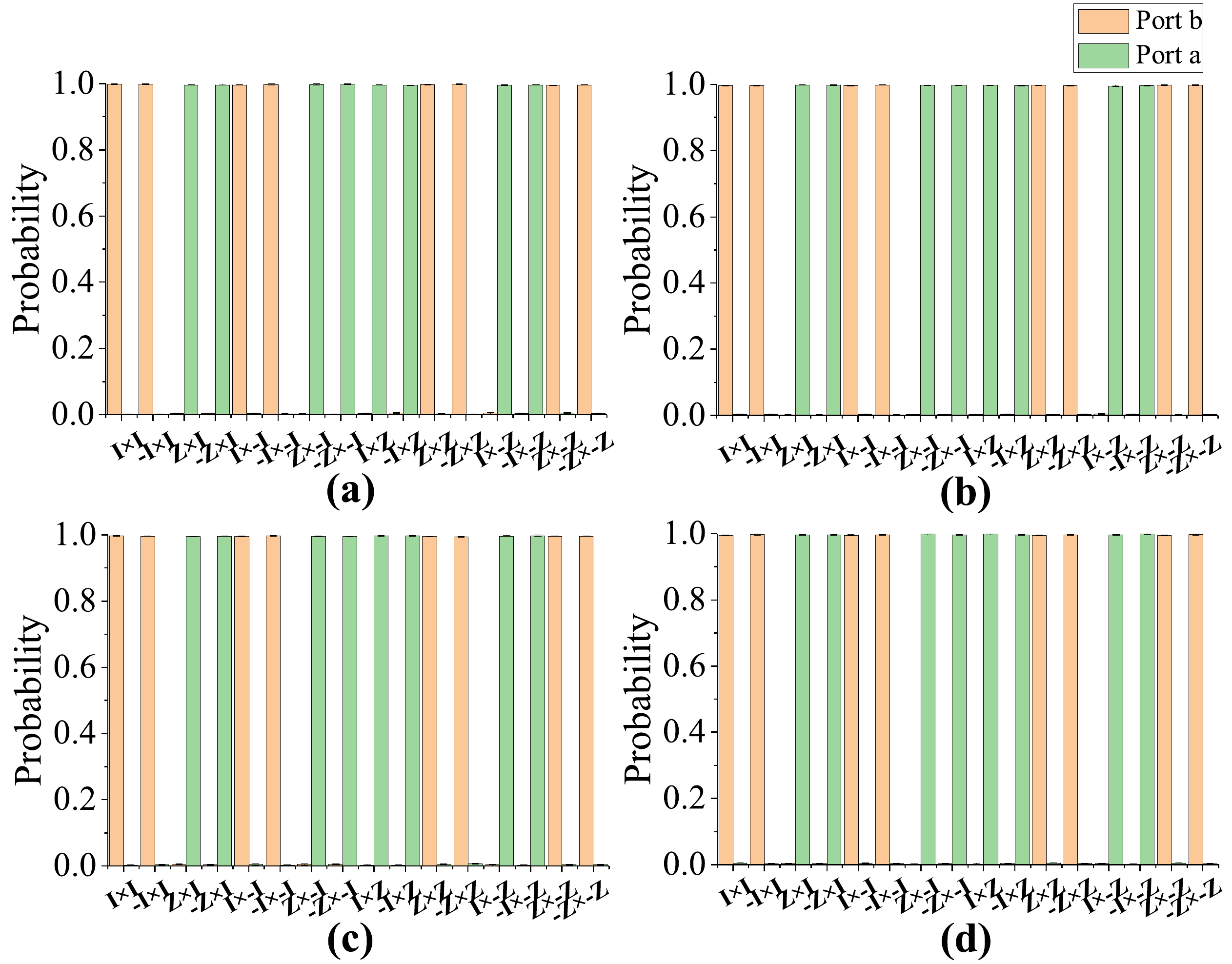}
    \caption{Experimental results to solve the generalized Deutsch's problem with two unknown functions $f_1$ and $f_2$. (a)-(d) show the experimental probabilities of the photons exiting from port $a$ or port $b$ in the polarization states $|H\rangle$, $|V\rangle$, $|D\rangle$, and $|A\rangle$, respectively, when determining whether there is a constant function in functions $f_1$ and $f_2$. The x-axis indicates the selection of the gate $U_1=D(f_1)\times D(f_2)$. The average success probability of these data is 0.9968$\pm$0.0014.} \label{F1F2}
\end{figure}

We first demonstrate our algorithm to solve the Deutsch's problem and a correspondence is presented in Tab. \ref{table1} of Supplemental Material.  We collect 600 thousand  experimental outcomes for the injected $H$-polarized photon by monitoring the responses from the port $a$ or the port $b$.  We evaluate the probability of photons to exit at each port, and plot the results in Fig. \ref{F1}(a). When the $U_1=\pm I$ (a constant function and commutes with $U_2=X$), we expect all photons to exit at port $b$ ideally, while when the $U_1=\pm Z$ (a balanced function and anti-commutes with $U_2=X$), we expect all photons to exit at port $a$ ideally. As shown in Fig. \ref{F1}, the experimental results agree with theoretical expectations.  To demonstrate that our algorithm does not depend on the initial incident polarized photon, we experimentally demonstrate the performance of our algorithm in different incident polarized basis \{$|H\rangle$, $|V\rangle$\}  and \{$|D\rangle$, $|A\rangle$\}. Here $|H\rangle$, $|V\rangle$, $|D\rangle$, and $|A\rangle$  correspond to horizontally, vertically, diagonally, and anti-diagonally polarized photon, respectively. We find that the minimum success probability of photons to exit at the expected port is 0.9950, and the average success probability is 0.9972$\pm$0.0013, which far exceeds the previous results \cite{tame2007experimental,zhang2010demonstration,zhang2012implementing}.

We also experimentally demonstrate our algorithm to solve the generalized Deutsch's problem with two functions (see Tab. \ref{table4}  in Supplemental Material) and the experimental results are plotted in Fig. \ref{F1F2}. The results show that the average success probability of photons to exit at the expected port achieves 0.9968$\pm$0.0014, which is good agreement with the theoretical expectations. The error bars correspond to the $1\sigma$ standard deviation, deduced from a Poissonian counting statistics of the single photon source. The errors in the experiments are mainly due to the imperfections of the single photon source, wave plates, and the photon detectors.  Our experimental setup has good scalability to carry out the generalized Deutsch's problem with $n>2$ Boolean functions, by setting the proper black-box operation $U_1$ and observing the responses from the port $a$ or the port $b$. 
The operation $U_1$ is defined as $U_1=D(f_1)\times D(f_2)\times \ldots \times D(f_n)$, where each $D(f_i)$ can be achieved through the wave plate. Each wave plate has four possible types of $\pm I$ and $\pm Z$. 
Therefore, there are total $4^n$ different possibilities for experimental realization of the $U_1$.

%\section{Conclusion}

\emph{Conclusion.}---In conclusion, we proposed  a novel quantum algorithm with indefinite causal order to solve the generalized Deutsch's problem. Our algorithm not only reduces the number of queries to the black-box by half over the classical algorithm, but also for the first time outperforms the Deutsch's algorithm in terms of quantum gate number and circuit depth. We experimentally demonstrated the algorithm by 2-SWITCH in a stable Sagnac loop interferometer, which greatly simplifies the experiments and the duty cycle. The experimental results showed the both ultra-high and robust success probabilities.  Our experiments have shown a clear advantage of quantum circuit superposition without a fixed-gate order. We anticipate that the indefinite causal order quantum circuits may also have advantages in related problems, such as quantum Fourier transform.

%Possibilities For Gate $U_1$ And U2 Can Be Configured As Any Requisite Single-Qubit Unitary Gates.
%We Estimated The Probabilities Of Photon Emission From Each Port For Each Of The  Functions Types Of F, With The Resulting Probabilities Illustrated In Figure 2. In The Event That F Is A Constant Function, We Would Expect All Photons To Exit From Port B, Whereas If F Is A Balanced Function, All Photons Should Exit From Port A. We Observed An Outstanding Level Of Agreement Between Our Collected Data And These Theoretical Projections. Our Success Rate In Accurately Identifying The Type Of F Was 0.9972$\Pm$0.0013, Indicating High Confidence In Our Experimental Outcomes. Furthermore, We Carried Out The Experiment Under Four Different Polarization States And Observed That The Algorithm Remains Polarization-Independent, As Evidenced By The Data Presented In Figure 2.Possibilities For Gate $U_1$ And U2 Can Be Configured As Any Requisite Single-Qubit Unitary Gates.

\section*{ACKNOWLEDGMENTS}

This work is supported by  Beijing Institute of Technology Research Fund Program for Young Scholars, National Key Research and Development Program Earth Observation and Navigation Key Specialities (No. 2018YFB0504300).

%\appendix

\bibliography{mybibliography}
\bibliographystyle{apsrev4-2}

\clearpage
\onecolumngrid
\begin{center}
\textbf{\large Supplemental Material}
\end{center}
\setcounter{equation}{0}
\setcounter{figure}{0}
\setcounter{table}{0}
\renewcommand{\theequation}{S\arabic{equation}}
\renewcommand{\thefigure}{S\arabic{figure}}
\renewcommand{\thetable}{S\arabic{table}}

\section{A generalized Deutsch's algorithm} \label{AppendixA}

We generalize the Deutsch's algorithm to a generalized Deutsch's algorithm, which can solve the generalized Deutsch's problem with $n$ Boolean functions.  As shown in Fig. \ref{Deutsch}(b), the generalized Deutsch's algorithm begins with an initial state $|\psi_0\rangle=|0\rangle_1|1\rangle_2$ and after two Hardmard ($H$) gates are applied, the initial state  $|\psi_0\rangle$ becomes
\begin{equation}  \label{eqA1}
\begin{aligned}
|\psi_1\rangle=\frac{1}{2}\big(|0\rangle_1+|1\rangle_1\big)\otimes\big(|0\rangle_2-|1\rangle_2\big).
\end{aligned}
\end{equation}
Subsequently, the superposition state $|\psi_1\rangle$ is acted on $n$ black-box unitary transformations $U_{f_i}$ ($i$=1, 2, $\ldots$, $n$): $U_{f_i}|x\rangle|y\rangle=|x\rangle|y\oplus f_i(x)\rangle$, where $\oplus$ denotes addition modulo 2. These black-boxes evolves $|\psi_1\rangle$ as
\begin{equation}  \label{eqA2}
\begin{aligned}
|\psi_2\rangle=\pm\frac{1}{2}\big(|0\rangle_1+(-1)^{\bigoplus\limits_{i=1}^n (f_i(0)\oplus f_i(1))}|1\rangle_1\big)\otimes\big(|0\rangle_2-|1\rangle_2\big).
\end{aligned}
\end{equation}
Finally,  the first qubit is acted on a Hardmard gate again, obtaining
\begin{equation}  \label{eqA3}
|\psi_3\rangle=\left\{\begin{aligned}
&\pm|0\rangle_1\otimes\frac{|0\rangle_2-|1\rangle_2}{\sqrt{2}},  \;\; \bigoplus\limits_{i=1}^n f_i(0)=\bigoplus\limits_{i=1}^n f_i(1), \\
&\pm|1\rangle_1\otimes\frac{|0\rangle_2-|1\rangle_2}{\sqrt{2}},  \;\; \bigoplus\limits_{i=1}^n f_i(0)\neq\bigoplus\limits_{i=1}^n f_i(1).
\end{aligned}
\right.
\end{equation}
From Eq. (\ref{eqA3}), the generalized Deutsch's problem can be solved by measuring the first qubit: if $n$ is odd and the measurement result is in the state $|0\rangle_1$ ($|1\rangle_1$), which means there are (are not) an odd number of constant functions; if $n$ is even and the measurement result is in the state $|0\rangle_1$ ($|1\rangle_1$), which means there are not (are) an odd number of constant functions. The generalized Deutsch's algorithm only needs to query each $f_i(x)$ once to solve this problem, while one has to query each $f_i(x)$ twice classically.

%\begin{proof}
%We first apply $U_1$ on the state $|x\rangle$, obtaining $U_1|x\rangle=(-1)^{f(x)}|x\rangle$, and then $U_2$ is applied on the result, obtaining $U_2U_1|x\rangle=(-1)^{f(x)}|x\oplus1\rangle$. Similarly, we can obtain $U_1U_2|x\rangle=(-1)^{f(x\oplus1)}|x\oplus1\rangle$. If $f(x)=f(x\oplus1)$,  means $f$ is a constant function, which is equivalent to [$U_1$, $U_2$]$|x\rangle$=0; if $f(x)\neq f(x\oplus1)$, means $f$ is a blanced function, which is equivalent to \{$U_1$, $U_2$\}$|x\rangle$=0.
%\end{proof}
%\qedhere %在证明结束的地方插入 QED 符号

\section{Our new algorithm to solve the generalized Deutsch's problem} \label{AppendixB}

We summary our proposed algorithm to solve the generalized Deutsch's problem and give a detail proof as follows.

\newtheorem*{theorem}{Algorithm}
\begin{theorem}  \label{Theo2} % 开始定理环境
Let operators $D(f_i)$=$\sum\limits_{x\in\{0, 1\}}(-1)^{f_i(x)}|x\rangle\langle x|$ ($i$=1, 2, $\ldots, n$), $U_1=D(f_1)D(f_2)\ldots D(f_n)$ and $U_2=X$. $\forall y\in \{0,1\}$, (i) for  $n$ is odd,
\begin{eqnarray}              \label{eq4}
\begin{aligned}
&[U_1, U_2]|y\rangle=0  \Leftrightarrow    \text{$\exists$ an odd number of  constant functions}, \\
&\{U_1, U_2\}|y\rangle=0  \Leftrightarrow   \text{$\nexists$ an odd number of constant functions}.
\end{aligned}
\end{eqnarray}
(ii) For  $n$ is even,
\begin{eqnarray}              \label{eq5}
\begin{aligned}
    &[U_1, U_2]|y\rangle=0  \Leftrightarrow    \text{$\nexists$ 
 an odd number of   constant functions}, \\
&\{U_1, U_2\}|y\rangle=0  \Leftrightarrow   \text{$\exists$ an odd number of constant functions}.
\end{aligned}
%\end{split}
\end{eqnarray}
Here [$U_1$, $U_2$]=$U_1U_2-U_2U_1$ and \{$U_1$, $U_2$\}=$U_1U_2+U_2U_1$ denote the commutator and anti-commutator of operators $U_1$ and $U_2$, respectively.
\end{theorem}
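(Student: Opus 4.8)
The plan is to reduce the entire statement to two elementary facts about single-qubit Pauli matrices, namely that $X$ commutes with $I$ and anti-commutes with $Z$. First I would evaluate $D(f_i)$ in each of the four possible cases. Directly from $D(f_i)=\sum_{x\in\{0,1\}}(-1)^{f_i(x)}|x\rangle\langle x|$, a constant $f_i$ yields $D(f_i)=\pm I$ (sign $+$ if $f_i\equiv 0$, sign $-$ if $f_i\equiv 1$), whereas a balanced $f_i$ yields $D(f_i)=\pm Z$. Thus every factor is a signed copy of $I$ or of $Z$, and the only relevant dichotomy is constant $\leftrightarrow \pm I$ versus balanced $\leftrightarrow \pm Z$.

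The second step is to collapse the product. Since $I$ and $Z$ commute and $Z^2=I$, the order of the factors is irrelevant and $U_1=D(f_1)\cdots D(f_n)=\varepsilon\,Z^{m}$, where $\varepsilon=\pm1$ absorbs the signs and $m$ is the number of balanced functions. Hence $U_1=\pm I$ when $m$ is even and $U_1=\pm Z$ when $m$ is odd. Writing $k$ for the number of constant functions we have $m=n-k$, so $U_1$ is a signed identity precisely when $k\equiv n \pmod 2$.

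Next I would read off the commutation behaviour from $[I,X]=0$ and $\{Z,X\}=0$: one gets $[U_1,X]=0$ exactly when $U_1=\pm I$, and $\{U_1,X\}=0$ exactly when $U_1=\pm Z$. In the two complementary cases the surviving bracket is a nonzero scalar multiple of $ZX$ or of $X$ respectively, each of which is invertible. This invertibility is the one point that deserves care: it lets me pass from an operator identity to the action on a basis vector, since for $|y\rangle\in\{|0\rangle,|1\rangle\}$ a nonzero invertible operator cannot annihilate $|y\rangle$. Consequently $[U_1,X]|y\rangle=0 \Leftrightarrow [U_1,X]=0$, and likewise for the anti-commutator, so the vector-level conditions in the statement are equivalent to the operator-level ones.

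The conclusion is then pure parity bookkeeping. Combining the above gives $[U_1,X]|y\rangle=0 \Leftrightarrow k\equiv n\pmod 2$ and $\{U_1,X\}|y\rangle=0 \Leftrightarrow k\not\equiv n\pmod 2$. Splitting on the parity of $n$ produces the four stated equivalences: if $n$ is odd then $k\equiv n$ means $k$ is odd, i.e. an odd number of constant functions, which is case (i); if $n$ is even then $k\equiv n$ means $k$ is even, i.e. not an odd number of constant functions, which is case (ii). I do not anticipate a genuine obstacle, as the argument is elementary; the only things to keep straight are the parity link $m=n-k$ between balanced and constant functions and the invertibility remark that legitimises restricting the conditions to a single computational-basis state $|y\rangle$.
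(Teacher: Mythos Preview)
Your argument is correct. The paper proceeds a little differently: rather than first collapsing $U_1$ to $\pm I$ or $\pm Z$, it computes the action on a basis vector directly, obtaining $U_2U_1|y\rangle=(-1)^{\bigoplus_i f_i(y)}|y\oplus 1\rangle$ and $U_1U_2|y\rangle=(-1)^{\bigoplus_i f_i(y\oplus 1)}|y\oplus 1\rangle$, and then compares the two phase bits. The parity relation you make explicit as $m=n-k$ is exactly what underlies the paper's criterion $\bigoplus_i f_i(0)=\bigoplus_i f_i(1)$, so the two arguments coincide in substance. Your Pauli-algebra packaging has the advantage of spelling out why the vector conditions $[U_1,X]|y\rangle=0$ and $\{U_1,X\}|y\rangle=0$ are equivalent to the operator identities (via invertibility of the nonvanishing bracket), a step the paper leaves implicit; the paper's phase computation, in turn, dispenses with collapsing the product and reads off the result in one line.
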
 % 结束定理环境

\begin{proof}
We first apply $U_1$ on the state $|y\rangle$, obtaining $U_1|y\rangle=\prod\limits_{i=1}^{n}D(f_i)|y\rangle=(-1)^{\bigoplus\limits_{i=1}^n f_i(y)}|y\rangle$, and then $U_2$ is applied on the result, obtaining $U_2U_1|y\rangle=(-1)^{\bigoplus\limits_{i=1}^n f_i(y)}|y\oplus1\rangle$. Similarly, we can obtain $U_1U_2|y\rangle=(-1)^{\bigoplus\limits_{i=1}^n f(y\oplus1)}|y\oplus1\rangle$. For $n$ is odd, if $\bigoplus\limits_{i=1}^n f_i(y)=\bigoplus\limits_{i=1}^n f_i(y\oplus1)$, that means there are an odd number of constant functions, which  corresponds to the commutation of $U_1$ and $U_2$;  if $\bigoplus\limits_{i=1}^n f_i(y)\neq\bigoplus\limits_{i=1}^n f_i(y\oplus1)$, that means there are no odd number of constant functions, which  corresponds to the anti-commutation of $U_1$ and $U_2$. In the same way, it can be easily proved that the algorithm is valid for the case where $n$ is even.
\end{proof}
\qedhere %在证明结束的地方插入 QED 符号

\section{Experimental  realization of the gate operations} \label{AppendixC}

The experimental realization of gate operations $D(f_i)$ and $U_2$ is presented in Tab. \ref{table3}.

\begin{table} [htbp]
\centering\caption{The angles for the realization of gate operations $D(f_i)$ and $U_2$ in our algorithm by using two quarter-wave plates (QWPs) and one half-wave plate (HWP).}
\begin{tabular}{clcccccccc}

\hline  \hline
&Gate operation         &\quad   &\qquad\; QWP   &\quad   &\qquad\; HWP   &\quad   &\qquad\; QWP    \\

\hline

& $D(f_i)=I$    &\quad  &\qquad\; $0^\circ$      &\quad  &\qquad\;  $0^\circ$     &\quad  &\qquad\; $0^\circ$      \\
& $D(f_i)=-I$   &\quad  &\qquad\; $90^\circ$     &\quad  &\qquad\;  $0^\circ$     &\quad  &\qquad\; $90^\circ$     \\
& $D(f_i)=Z$    &\quad  &\qquad\; $0^\circ$      &\quad  &\qquad\;  $90^\circ$    &\quad  &\qquad\; $90^\circ$     \\
& $D(f_i)=-Z$   &\quad  &\qquad\; $90^\circ$     &\quad  &\qquad\;  $0^\circ$     &\quad  &\qquad\; $0^\circ$      \\
& $U_2=X$    &\quad  &\qquad\; $0^\circ$      &\quad  &\qquad\;  $45^\circ$    &\quad  &\qquad\; $0^\circ$      \\

\hline  \hline
\end{tabular}\label{table3}
\end{table}

\section{The experimental correspondence for our algorithm to solve the generalized Deutsch's problem} \label{AppendixD}

A detail list of our algorithm to solve the Deutsch's problem is shown in Tab. \ref{table1}.
\begin{table} [htb]
\centering\caption{An experimental correspondence for our algorithm to solve the Deutsch's problem.}
\begin{tabular}{cccccccccc}

\hline  \hline
&Function class       &\quad & Function $f$ &\quad & $U_1$ &\quad & $U_2$ &\quad  &Expect port   \\

\hline

& Constant  &\quad & $f(0)=f(1)=0$           &\quad &  $I$  &\quad & $X$ &\quad  &   Port $b$     \\
& Constant  &\quad & $f(0)=f(1)=1$           &\quad &  $-I$ &\quad & $X$ &\quad  &   Port $b$     \\
& Balanced  &\quad & $f(0)=0$ and $f(1)=1$   &\quad &  $Z$  &\quad & $X$ &\quad  &   Port $a$     \\
& Balanced  &\quad & $f(0)=1$ and $f(1)=0$   &\quad &  $-Z$ &\quad & $X$ &\quad  &   Port $a$     \\

\hline  \hline
\end{tabular}\label{table1}
\end{table}

A detail list of our algorithm to solve the generalized Deutsch's problem  with two Boolean functions is shown in Tab. \ref{table4}.
\begin{table} [htb]
\centering\caption{An experimental correspondence for our algorithm to solve the generalized Deutsch's problem with two Boolean functions $f_1$ and $f_2$.}
\begin{tabular}{cccccccccc}

\hline  \hline
& Function class        &\quad & Functions $f_1$ and $f_2$ &\quad & $U_1=D(f_1)\times D(f_2)$ &\quad & $U_2$ &\quad  &Expect port   \\

\hline

&Two constant $f_1$, $f_2$ &\quad & $f_1(0)=f_1(1)=0$, $f_2(0)=f_2(1)=0$           &\quad &  $I\times I$  &\quad & $X$ &\quad  &   Port $b$     \\
&Two constant $f_1$, $f_2$ &\quad & $f_1(0)=f_1(1)=0$, $f_2(0)=f_2(1)=1$           &\quad &  $I\times (-I)$  &\quad & $X$ &\quad  &   Port $b$     \\
&One constant $f_1$ &\quad & $f_1(0)=f_1(1)=0$, $f_2(0)=0$, $f_2(1)=1$           &\quad &  $I\times Z$  &\quad & $X$ &\quad  &   Port $a$     \\
&One constant $f_1$ &\quad & $f_1(0)=f_1(1)=0$, $f_2(0)=1$, $f_2(1)=0$           &\quad &  $I\times (-Z)$  &\quad & $X$ &\quad  &   Port $a$     \\
%%%%%
&Two constant $f_1$, $f_2$ &\quad & $f_1(0)=f_1(1)=1$, $f_2(0)=f_2(1)=0$           &\quad &  $-I\times I$  &\quad & $X$ &\quad  &   Port $b$     \\
&Two constant $f_1$, $f_2$ &\quad & $f_1(0)=f_1(1)=1$, $f_2(0)=f_2(1)=1$           &\quad &  $-I\times (-I)$  &\quad & $X$ &\quad  &   Port $b$     \\
&One constant $f_1$ &\quad & $f_1(0)=f_1(1)=1$, $f_2(0)=0$, $f_2(1)=1$           &\quad &  $-I\times Z$  &\quad & $X$ &\quad  &   Port $a$     \\
&One constant $f_1$ &\quad & $f_1(0)=f_1(1)=1$, $f_2(0)=1$, $f_2(1)=0$           &\quad &  $-I\times (-Z)$  &\quad & $X$ &\quad  &   Port $a$     \\
%%%%%%%%
&One constant $f_2$ &\quad & $f_1(0)=0$, $f_1(1)=1$, $f_2(0)=f_2(1)=0$           &\quad &  $Z\times I$  &\quad & $X$ &\quad  &   Port $a$     \\
&One constant $f_2$ &\quad & $f_1(0)=0$, $f_1(1)=1$, $f_2(0)=f_2(1)=1$          &\quad &  $Z\times (-I)$  &\quad & $X$ &\quad  &   Port $a$     \\
&No constant $f$ &\quad & $f_1(0)=0$, $f_1(1)=1$, $f_2(0)=0$, $f_2(1)=1$         &\quad &  $Z\times Z$  &\quad & $X$ &\quad  &   Port $b$     \\
&No constant $f$ &\quad &$f_1(0)=0$, $f_1(1)=1$, $f_2(0)=1$, $f_2(1)=0$            &\quad &  $Z\times (-Z)$  &\quad & $X$ &\quad  &   Port $b$     \\
%%%%%%%%
&One constant $f_2$ &\quad & $f_1(0)=1$, $f_1(1)=0$, $f_2(0)=f_2(1)=0$           &\quad &  $-Z\times I$  &\quad & $X$ &\quad  &   Port $a$     \\
&One constant $f_2$ &\quad & $f_1(0)=1$, $f_1(1)=0$, $f_2(0)=f_2(1)=1$          &\quad &  
 $-Z\times (-I)$  &\quad & $X$ &\quad  &   Port $a$     \\
&No constant $f$ &\quad & $f_1(0)=1$, $f_1(1)=0$, $f_2(0)=0$, $f_2(1)=1$         &\quad &  
 $-Z\times Z$  &\quad & $X$ &\quad  &   Port $b$     \\
&No constant $f$ &\quad &$f_1(0)=1$, $f_1(1)=0$, $f_2(0)=1$, $f_2(1)=0$            &\quad &  $-Z\times (-Z)$  &\quad & $X$ &\quad  &   Port $b$     \\

\hline  \hline
\end{tabular}\label{table4}
\end{table}

%The algorithm to solve the generalized Deutsch’s problem with $n$ 
% Boolean functions can be realized experimentally by choosing the 
%black-box operation $U_1$ and observing the responses from the port 
%$a$ or the port $b$. The operation $U_1=D(f_1)\times D(f_2)\times 
%\ldots \times D(f_n)$, in which each $D(f_i)$ has four possible 
%types of $\pm I$ and $\pm Z$. Therefore, there are total $4^n$ 
% different possibilities for experimental realization of the $U_1$. 

In experiments, one person sets up the wave plates to achieve the black-box operation $U_1$, and another one who does not know the type of $U_1$, but only knows the parity of $n$, can solve the generalized Deutsch’s problem by observing the responses from  the port $a$ or the port $b$.

%%
%\begin{table} [htb]
%\centering\caption{A correspondence for our proposal to solve the generalized Deutsch's problem for two functions $f_1$ and $f_2$.}
%\begin{tabular}{cccccccccc}
%
%\hline  \hline
%&Class        &\quad & Function $f$ &\quad & $U_1$ &\quad & $U_2$ &\quad  &Expect port   \\
%
%
%\hline
%
%& Constant  &\quad & $f(0)=f(1)=0$           &\quad &  $I$  &\quad & $X$ &\quad  &   Port $b$     \\
%& Constant  &\quad & $f(0)=f(1)=1$           &\quad &  $-I$ &\quad & $X$ &\quad  &   Port $b$     \\
%& Balanced  &\quad & $f(0)=0$ and $f(1)=1$   &\quad &  $Z$  &\quad & $X$ &\quad  &   Port $a$     \\
%& Balanced  &\quad & $f(0)=1$ and $f(1)=0$   &\quad &  $-Z$ &\quad & $X$ &\quad  &   Port $a$     \\
%
%
%\hline  \hline
%\end{tabular}\label{table1}
%\end{table}

\end{document}